\documentclass[runningheads]{llncs}
\usepackage[T1]{fontenc}
\usepackage{graphicx}
\usepackage{booktabs}
\usepackage{array}
\usepackage{siunitx}
\usepackage{amsmath}
\usepackage{enumitem}
\usepackage{amssymb}
\usepackage{multirow}
\usepackage{xcolor}

\begin{document}

\title{CRSF: Collusion-Resilient Privacy-Preserving Sensor Fusion with Byzantine-Robust Participation}

\author{Chao Yin\inst{1,2} \and
Haihong Tian\inst{3} \and
Zheng Yang\inst{3}\and
Haibin Zhang\inst{4,5} \and
Fabio Massacci\inst{1,6} \and
Chenglu Jin\inst{2}}

\authorrunning{C. Yin et al.}
\institute{Vrije Universiteit Amsterdam, Amsterdam, the Netherlands \\ \email{c.yin@vu.nl, fabio.massacci@ieee.org} \and
Centrum Wiskunde \& Informatica, Amsterdam, the Netherlands \\
\email{chenglu.jin@cwi.nl} \and Southwest University, Chongqing, China \\ \email{tianhaihong@email.swu.edu.cn, youngzheng@swu.edu.cn} \and Yangtze Delta Region Institute of Tsinghua University, Zhejiang, China \\ \email{bchainzhang@aliyun.com} \and Jiaxing Key Laboratory of Artificial Intelligence and Cyber Resilience, Zhejiang, China \and
University of Trento, Trento, Italy}
\maketitle
\begin{abstract}

Privacy-preserving sensor fusion enables an untrusted server to compute an aggregate result over distributed sensor measurements without learning either individual inputs or the final output. Recent garbled-circuit-based protocols provide an efficient realization of this functionality in a sensor--server--client architecture, but remain vulnerable to sensor--server collusion and Byzantine manipulation of sensor participation. These weaknesses can compromise honest-sensor privacy, incorrectly exclude honest sensors, and corrupt the computed fusion result, thereby undermining the security guarantees expected from the protocol.

We present CRSF, a collusion-resilient sensor-fusion protocol that addresses these weaknesses while providing privacy, correctness with explicit abort, and liveness. CRSF introduces a Practical Byzantine Fault Tolerance (PBFT)-based agreement phase for sensor submissions and uses server-specific, status-dependent label release with threshold protection of circuit-input labels. This design prevents any Byzantine server from unilaterally manipulating sensor participation and prevents any admissible sensor-server coalition from obtaining enough secret material to compromise honest-sensor privacy.

We implement CRSF and compare its online execution time with the most relevant state-of-the-art baseline. Our Google Cloud evaluation measures the total computation and communication cost of the online protocol under fault-free and representative faulty executions. Across a range of fault-tolerant fusion circuits and up to 261 sensors, CRSF demonstrates a highly practical trade-off between robust security and protocol performance.

\keywords{Sensor Fusion  \and Secure Computation \and Collusion Resilient.}
\end{abstract}
\newcommand{\numOfSensor}{n}
\newcommand{\sensorSet}{\mathcal{S}}
\newcommand{\sensori}[1]{s_{#1}}
\newcommand{\serverA}{\mathcal{A}}
\newcommand{\serverB}{\mathcal{B}}
\newcommand{\singleserver}{\mathsf{Srv}}
\newcommand{\client}{\mathcal{C}}

\newcommand{\adversary}{\mathit{Adv}}
\newcommand{\simulator}{\mathit{Sim}}

\newcommand{\sensorLabel}[3]{\sigma_{#1,#2}^{#3}}
\newcommand{\sensorLabelsingleserver}[3]{\sigma_{#1,#2}^{#3}}

\newcommand{\sLabel}{\sigma}
\newcommand{\cLabel}{C}
\newcommand{\Label}{W}

\newcommand{\clientLabel}[4]{C^{#3,(#4)}_{#1,#2}}
\newcommand{\wireShare}[4]{\omega^{#4}_{#1,#2,#3}}
\newcommand{\wireLabel}[3]{\omega_{#1,#2}^{#3}}
\newcommand{\circLabel}[3]{W_{#1,#2}^{#3}}
\newcommand{\zLabel}[3]{Z_{#1,#2}^{#3}}

\newcommand{\honest}{\mathsf{Hon}}
\newcommand{\malicious}{\mathsf{Mal}}

\newcommand{\lsb}{Lsb}
\newcommand{\FaultySensors}{\sigma_f}
\newcommand{\Sign}{Sig}
\newcommand{\corruptedParty}{\mathcal{M}}
\newcommand{\Enc}{\text{En}}
\newcommand{\Dec}{\text{De}}

\newcommand{\offset}{\Delta}

\newcommand{\share}{share}
\newcommand{\server}[1]{Ser_{#1}}

\newcommand{\simCircLabel}[3]{\widetilde{W}_{#1,#2}^{#3}}
\newcommand{\circShare}[4]{W_{#1,#2}^{#3,(#4)}}
\newcommand{\simCircShare}[4]{\widetilde{W}_{#1,#2}^{#3,(#4)}}

\section{Introduction}\label{sec:introduction}

Sensor fusion is a fundamental task in cyber-physical systems, distributed sensing, and safety-critical monitoring applications. By combining readings from multiple sensors, sensor fusion provides a view of the physical environment that is more accurate, robust, and actionable than any individual measurement alone. Many sensor-fusion deployments can be abstracted as a three-role architecture, as seen in mobile crowdsensing, participatory sensing, and cloud-assisted sensing systems~\cite{jin2024pg,wang2020privacy,liu2016data,de2011short,cornelius2008anonysense,de2009privacy}. Distributed sensors or mobile devices contribute measurements, a client or application specifies the fusion task and consumes the result, and an intermediate service platform collects, processes, or stores the data. This architecture imposes stringent security requirements: sensor readings may reveal sensitive information and should remain private from both the server and the client; corrupted sensors should not be able to arbitrarily bias or disrupt the computation; and, under the intended availability assumptions, the client should reliably obtain the fusion result.

Prior work on three-role sensing and aggregation systems has studied related security requirements from different angles, including privacy-preserving aggregation over sensor or mobile-device inputs, verifiable aggregation in the presence of corrupted aggregators or sensors, and robustness against false data or pollution attacks~\cite{przydatek2003sia,he2007pda,conti2009privacy,roy2014secure,nath2009secure,zhang2016privacy}. In the more specific setting of privacy-preserving sensor fusion, PG~\cite{jin2024pg} is a garbled-circuit-based protocol that combines input privacy with robustness against pollution attacks, and provides guaranteed output delivery under a bounded Byzantine model. To the best of our knowledge, PG is the first construction to simultaneously achieve input privacy, Byzantine fault tolerance, and guaranteed output delivery in a three-role sensor-fusion architecture. PG is therefore a particularly important baseline in this direction.

However, the security guarantees of PG do not extend to sensor-server collusion, where corrupted sensors and a Byzantine server combine their protocol views. This represents a significant gap, as collusion between data providers and aggregation-side entities is a well-documented threat in sensing systems~\cite{shi2010prisense,gunther2014privacy,yao2014privacy,palazzo2024privacy}. Such collusion is highly realistic in scenarios where a single adversary simultaneously compromises both a subset of sensors and a central server. This gap motivates us to revisit privacy-preserving sensor fusion under a stronger adversarial model, which accounts for both sensor-server collusion and Byzantine server behavior, while preserving the core properties of the existing baseline.

We identify two limitations of PG under this model. First, once sensor-server collusion is allowed, a corrupted sensor and a Byzantine server may combine their views to obtain complementary circuit-input labels on a circuit-input wire, which can compromise honest-sensor privacy and fusion result integrity. Second, PG relies on a single server to determine which sensor submissions will be used in data fusion. A Byzantine server can therefore manipulate the effective participating set of the sensors, for example by excluding honest sensors while retaining malicious ones, and thereby invalidate the fault-threshold premise required by the underlying fault-tolerant algorithm. These observations reveal two design requirements for strengthening garbled-circuit-based sensor fusion: participation decisions must be correctly made, and circuit-input labels must be protected against sensor-server coalitions.

To meet these requirements, we propose CRSF, a collusion-resilient sensor-fusion protocol that follows the garbled-circuit workflow, like PG~\cite{jin2024pg}, while strengthening it against sensor-server collusion and Byzantine manipulation of participation. CRSF extends PG by introducing Byzantine-robust agreement on sensor submissions with threshold protection of circuit-input labels. Our work makes the following contributions:

\begin{enumerate}

\item We identify two limitations in PG's trust assumptions: a single server unilaterally dictates sensor participation, and sensor-server collusion exposes secret circuit-input labels.

\item We present CRSF, a collusion-resilient garbled-circuit-based sensor-fusion protocol preserving the client-sensor-server workflow while resisting Byzantine participation manipulation and sensor-server collusion.

\item We introduce two protocol-level mechanisms for CRSF: agreement about sensor participation before status-dependent label release, and threshold protection of downstream circuit-input labels before fusion evaluation.

\item We prove that CRSF provides privacy against corrupted sensors colluding with one Byzantine server, and achieves correctness with explicit abort and liveness under our threat model.

\item We implement CRSF and evaluate its online performance on Google Cloud under both fault-free and representative faulty executions. The results show that the added sensor participation agreement and label reconstruction mechanisms incur moderate overhead, with total measured phase time remaining around one second at the largest tested instances.

\end{enumerate}

\paragraph{Paper organization.}
Section~\ref{sec:systemModel} presents the system and threat model. Section~\ref{sec:preliminaries} presents
preliminaries and reviews the PG baseline. Section~\ref{sec:pg-limitations} analyzes the limitations of PG under our threat model. Section~\ref{sec:Ourscheme} describes the CRSF
protocol. Section~\ref{sec:security-guarantees} proves its security. Section~\ref{sec:evaluation} presents the
experimental evaluation. Section~\ref{sec:RelatedWork} discusses related work, and Section~\ref{sec:conclusion}
concludes the paper.

\section{System and Threat Model}\label{sec:systemModel}

\noindent\textbf{System Model.}
We consider a sensor fusion system that involves three roles: a client, a set of $n$ sensors ($\{\sensori{i}\}_{i\in[0,n-1]}$) and four servers \(\server{1},\server{2},\server{3},\server{4}\). The client aims to obtain the output of a fusion function $f$ evaluated over the private measurements of the sensors, without receiving those measurements in the clear. During the setup phase, the client may interact with the sensors and generate the cryptographic material required for the protocol. In the online phase, each sensor $\sensori{i}$ locally encodes its measurement and sends the resulting encoded messages to the servers. The servers then perform the fusion computation on the encoded inputs and send the encoded outputs to the client, without learning the plaintext sensor measurements or the plaintext fusion result. During this phase, the client communicates only with the servers and is the only party that can decode the final output. Sensors are mutually independent and do not communicate with each other.

\vspace{2mm}
\noindent\textbf{Communication Model.} We assume sensor-to-server links are synchronous with a known delay bound, so that a submission sent by an honest sensor reaches every honest server before the submission deadline $T_{sub}$. Inter-server communication is only assumed to be partially synchronous, as required by PBFT.

\vspace{2mm}
\noindent\textbf{Threat Model.}
We consider a probabilistic polynomial-time adversary corrupting multiple sensors and at most one server. Corrupted sensors may arbitrarily deviate, submit malformed labels, and collude with the corrupted server. This server may exhibit Byzantine behaviors, such as sending inconsistent messages. The client honestly follows the protocol. The number of malicious sensors must remain below the underlying Marzullo-style algorithm's fault threshold (e.g., fewer than $1/3$ or $1/2$ of all sensors~\cite{jin2024pg}). Unavailable sensors also count toward this threshold.

\vspace{2mm}
\noindent\textbf{Relation to Existing Models.}
This three-role abstraction follows a common pattern in sensing and aggregation systems~\cite{wang2020privacy,de2013participatory,nath2009secure}. The collusion threat considered here is also consistent with prior work on sensing and aggregation security: existing studies have shown that privacy and integrity guarantees may fail when a central service, aggregator, or server combines its view with corrupted data providers, users, or sensors~\cite{przydatek2003sia,shi2010prisense,gunther2014privacy,palazzo2024privacy}. Our model instantiates this line of threat models in garbled-circuit-based sensor fusion, where sensor-server collusion creates additional risks for honest-sensor privacy.

\vspace{2mm}
\noindent\textbf{Security and Robustness Goals.} CRSF aims to achieve the following properties under the threat model considered in this work.
\begin{enumerate}
\item \textit{Privacy.}
Honest sensors' private measurements remain hidden from every individual server, even under sensor-server collusion within the stated corruption bounds. The client learns only the protocol output and the protocol-visible status information needed for output delivery.

\item \textit{Correctness.}
An honest client either outputs $\bot$ or accepts only the correct output of the fusion function evaluated on the protocol-determined sensor inputs and default replacements.

\item \textit{Liveness.}
An honest client eventually receives either the correct output or \(\bot\) under the stated assumptions.

\item \textit{Byzantine-robust agreement on participation.}
All honest servers agree on a common submission-level outcome for each sensor, namely whether to accept a submitted vector or exclude the sensor.

\end{enumerate}

\section{Preliminaries}\label{sec:preliminaries}

\noindent\textbf{Garbled circuits.}
Garbled circuits, introduced by Yao~\cite{yao1986generate} and later formalized in simulation-based frameworks~\cite{lindell2009proof,bellare2012foundations}, are a standard primitive for secure computation. We use the standard syntax of a garbling scheme for Boolean circuits. A garbling scheme is a tuple of probabilistic polynomial-time algorithms
$
\mathsf{GC}=(\mathsf{Gb},\mathsf{En},\mathsf{Ev},\mathsf{De}).
$
On input a security parameter $1^\kappa$ and a Boolean circuit $C$, the garbling algorithm samples
$
(\widehat C,\mathsf{e},\mathsf{d}) \leftarrow \mathsf{Gb}(1^\kappa,C),
$
where $\widehat C$ is the garbled circuit, $\mathsf{e}$ is encoding information, and $\mathsf{d}$ is decoding information. For a circuit with input wires $w_1,\ldots,w_m$, a projective garbling scheme associates each input wire $w_i$ with two labels $L_{w_i}^0$ and $L_{w_i}^1$, representing the semantic values 0 and 1, respectively. Given an input vector $x=(x_1,\ldots,x_m)\in \{0,1\}^m$, the encoding algorithm outputs the encoded input vector
$
X \leftarrow \mathsf{En}(\mathsf{e},x),
$
which contains exactly one label for each input wire. The evaluation algorithm computes
$
Y \leftarrow \mathsf{Ev}(\widehat C,X),
$
where $Y$ is a vector of garbled output labels. The decoding algorithm then maps these labels to a plaintext output,
$
y \leftarrow \mathsf{De}(\mathsf{d},Y).
$
The decoded value satisfies $y=C(x)$ except with negligible probability.

Garbling schemes are commonly studied with several security notions, including privacy, obliviousness, and authenticity~\cite{bellare2012foundations}. Let $\Phi(C)$ denote the public leakage associated with garbling $C$, such as the circuit size, input and output lengths, or topology, depending on the concrete scheme and application.

\begin{itemize}
    \item \noindent\textit{Obliviousness.} A garbling scheme $\mathsf{GC}=(\mathsf{Gb},\mathsf{En},\mathsf{Ev},\mathsf{De})$ satisfies obliviousness if there exists a PPT simulator $\mathsf{Sim}^{\mathsf{obv}}_{\mathsf{GC}}$ such that, for every circuit $C$ and input $x$, $(\widehat C,X)\approx_c \mathsf{Sim}^{\mathsf{obv}}_{\mathsf{GC}}(1^\kappa,\Phi(C))$, where $(\widehat C,\mathsf{e},\mathsf{d})\leftarrow\mathsf{Gb}(1^\kappa,C)$ and $X\leftarrow\mathsf{En}(\mathsf{e},x)$.
    \item \noindent\textit{Authenticity.} A garbling scheme satisfies authenticity if, for every PPT adversary $\mathcal A$, every circuit $C$, and every input $x$, after $(\widehat C,\mathsf{e},\mathsf{d})\leftarrow\mathsf{Gb}(1^\kappa,C)$, $X\leftarrow\mathsf{En}(\mathsf{e},x)$, and $Y\leftarrow\mathsf{Ev}(\widehat C,X)$, the probability that $\mathcal A(1^\kappa,\widehat C,X)$ outputs $Y'\neq Y$ such that $\mathsf{De}(\mathsf{d},Y')\neq\bot$ is negligible in $\kappa$.
\end{itemize}

\noindent\textbf{Shamir secret sharing.}
A $(t,n)$-threshold Shamir secret sharing scheme~\cite{shamir1979share} distributes a secret $s \in \mathbb{F}$ among $n$ parties over a finite field $\mathbb{F}$, where $|\mathbb{F}| > n$. The dealer samples a random polynomial
$
p(X)=s+a_1X+\cdots+a_{t-1}X^{t-1}
$
of degree at most $t-1$, where $a_1,\ldots,a_{t-1}$ are chosen independently and uniformly from $\mathbb{F}$. Each party $i$ receives the share $p(\alpha_i)$, where $\alpha_1,\ldots,\alpha_n$ are distinct nonzero field elements. Any set of $t$ shares can reconstruct $s=p(0)$ by interpolation, whereas any set of fewer than $t$ shares is statistically independent of $s$.

\vspace{2mm}
\noindent\textbf{Practical Byzantine Fault Tolerance.} Practical Byzantine Fault Tolerance (PBFT)~\cite{castro2002practical} is a Byzantine fault-tolerant state-machine replication protocol. For $n$ replicas, PBFT tolerates up to $f$ Byzantine replicas provided that $n \geq 3f+1$. In each consensus instance, a primary proposes a value in the pre-prepare phase, and replicas then use prepare and commit phases to form Byzantine quorums. PBFT provides safety and liveness under its standard assumptions. Safety requires that no two honest replicas decide different values in the same consensus instance. Liveness holds under partial synchrony, authenticated communication, and at most $f$ Byzantine replicas. If the current primary is faulty or fails to make progress, honest replicas can initiate a view change, moving the protocol to a new view with a different primary while preserving previously prepared information. Thus, once the network satisfies the synchrony condition, a faulty primary cannot indefinitely prevent progress. CRSF uses a PBFT-based submission agreement protocol rather than full state-machine replication. Specifically, the servers use PBFT's quorum and phase structure to agree on an evidence-supported submission-level outcome for each sensor.

\vspace{2mm}
\noindent\textbf{The PG Baseline.} PG~\cite{jin2024pg} is a privacy-preserving sensor-fusion protocol for the sensor-server-client architecture. It combines Marzullo-style fault-tolerant algorithms~\cite{marzullo1990tolerating,jin2024pg} with garbled circuits. In addition to privacy-preserving computation, PG aims to tolerate malicious sensor behavior and to ensure output delivery under its original threat model.

At a high level, PG separates circuit garbling from sensor-input garbling: the client prepares the garbled circuits used by the protocol, while each sensor garbles its own private measurement into sensor-side input labels. The server thus receives the garbled circuits from the client and the garbled sensor inputs from the sensors, and then evaluates the relevant gates as in standard garbled-circuit evaluation. To support malformed-input detection and liveness, PG introduces two special gate types, called \emph{checking gates} and \emph{filter gates}. These gates form two input-processing layers that are evaluated by the server before it evaluates the garbled fusion circuit. For clarity, we use circuit-input labels to denote the labels used to evaluate the garbled fusion function circuit, distinguishing them from the labels used as inputs to the checking gates and filter gates. Note that garbled sensor inputs, namely sensor submissions, are not the circuit-input labels used for fusion circuit evaluation.

Checking gates are used to check whether the garbled sensor inputs submitted by sensors are well-formed. Each checking gate is associated with one or two specific input wires and is constructed so that any combination of valid labels for these wires is evaluated to $0^\kappa$.
This allows the server to detect malformed sensor submissions from the outputs obtained by evaluating the checking gates, namely, a submission is malformed if the corresponding output is not a zero string. The server then informs the client which sensors are unavailable or fail to pass checking gate validation before proceeding to the filter gate layer.

Filter gates are then used to derive the circuit-input labels used in the subsequent fusion circuit evaluation. Each filter gate takes as input a sensor-side input label and a client-side label. The client releases the client-side label according to the server's request, based on whether the corresponding sensor submitted well-formed labels. The filter gate is designed so that, for a sensor that submits well-formed labels, the server obtains circuit-input labels with the same truth values as the submitted sensor-side input labels. Otherwise, the server obtains a designated default label corresponding to the semantic value~$1$, which represents the maximum-value replacement for sensors that fail to submit well-formed labels. More concretely, consider a filter gate with two input wires, one for the client-side input and one for the sensor-side input.

Since PG inherits the FreeXOR~\cite{kolesnikov2008improved} optimization, these circuit-input labels adopt a fixed global offset $\offset$, satisfying $\Label^1 = \Label^0 \oplus \offset$. The offset is used only for wires in the fusion circuit, including these circuit-input labels, whereas checking gates and the input wires of filter gates are garbled independently of $\offset$.

Section~\ref{sec:pg-limitations} analyzes the limitations of PG under the threat model of Section~\ref{sec:systemModel}.

\section{Limitations of PG}
\label{sec:pg-limitations}

PG is secure under its original threat model, where the server does not collude
with sensors and follows the prescribed participation procedure. Under the
stronger model considered in this work, two limitations arise. The first requires active sensor--server collusion; the second is an inherent weakness of the PG design that a Byzantine server can exploit without any sensor cooperation.

First, PG does not preserve input privacy once sensor--server collusion is
allowed. A Byzantine server colluding with a corrupted sensor can declare the
sensor to be honest during filter-gate evaluation, causing the client to release
the client-side label information for the honest branch. Since the corrupted
sensor can generate both valid sensor-side input labels for its own input
positions, the coalition can evaluate the corresponding filter gates under both
labels and obtain both valid circuit-input labels on the same fusion-circuit input
wire. XORing these labels reveals the FreeXOR offset $\Delta$. Once $\Delta$ is
known, any valid circuit-input label on a fusion-circuit wire reveals its
complementary label. The adversary is therefore no longer restricted to the
unique label assignment induced by the honest execution, which violates the intended input-privacy guarantee of PG. Also, knowing $\Delta$ would trivially allow a malicious server to flip the semantic meaning of the fusion circuit outputs.
Such output flipping additionally breaks the authenticity of the underlying garbling scheme: an adversary who knows $\Delta$ can produce $Y' \neq Y$ on the output wires such that $\mathsf{De}(\mathsf{d}, Y')$ decodes to a valid but semantically flipped result, violating the standard authenticity guarantee that no PPT adversary should produce a non-$\bot$ decoded output different from $C(x)$~\cite{bellare2012foundations}.

Second, and independently of sensor--server collusion, PG relies on a single server to determine which sensor submissions are
admitted to the fusion computation. A Byzantine server acting alone can therefore exclude
honest sensors while retaining corrupted ones, changing the effective
participating set and potentially violating the fault-threshold premise required
by the underlying fault-tolerant fusion algorithm.
This gap motivates the Byzantine-robust participation agreement goal introduced in this work (Section~\ref{sec:systemModel}).

The two limitations have distinct structural roots. In the first case, the root weakness is that a colluding server can obtain both circuit-input labels of a wire by evaluating the filter gate under both sensor-input labels from a compromised sensor. In the second, it is that the single server in PG can unilaterally decide which sensor submissions are admitted.

\vspace{-2mm}
\noindent\paragraph{Illustrative example.}
Consider a circuit with two input AND gates whose outputs feed a third AND gate. Sensor~$s_1$ provides the inputs to the first AND gate, and sensor~$s_2$ provides the inputs to the second. If $s_1$ colludes with the server, the coalition can obtain both valid labels for $s_1$'s input wires and recover the global FreeXOR offset $\Delta$. It can then derive complementary valid labels on other fusion-circuit wires, including those of $s_2$, and evaluate the second AND gate under alternative valid input-label combinations. Observing the resulting output labels reveals the semantic correspondence between $s_2$'s labels and their truth values. The output of the third AND gate can be flipped by XORing with $\Delta$. This illustrates why the original PG design does not provide privacy and correctness under sensor--server collusion.

These limitations motivate CRSF, which extends the PG-style garbled-circuit workflow while providing resilience against sensor--server collusion and Byzantine manipulation of sensor participation.

\section{The CRSF Protocol}\label{sec:Ourscheme}

A seemingly direct way to mitigate the label-exposure problem in PG is to use
OT-based label selection. In such a design, the client acts as the OT sender
and provides each sensor with only the sensor-side input labels corresponding to
that sensor's actual input bits. Thus, each sensor no longer obtains both valid
sensor-side input labels for any input position, which prevents a
sensor--server coalition from deriving both corresponding circuit-input labels
on the same fusion-circuit input wire through filter-gate evaluation.

However, this approach has two limitations in our setting. Architecturally, it requires each sensor to run online OT with the client in addition to submitting encoded inputs to the servers. This increases the online cost for both sensors and the client, and forces the client to participate in per-sensor label-transfer interactions, whereas the client is intended to remain lightweight during the online phase. Functionally, OT-based label selection addresses only the label-exposure problem; it does not prevent a Byzantine server from manipulating which sensor submissions are admitted to the fusion computation.

CRSF therefore follows a different design. Its design is guided by two
principles. First, an honest sensor submission
delivered to the honest servers should not be unilaterally excluded by a
Byzantine server, since such exclusion can invalidate the fault-tolerance
guarantee of the protocol. Second, for each input wire of the fusion circuit,
the servers should reconstruct only the circuit-input label determined by the
agreed sensor submission and the accepted status vector. Accordingly, CRSF uses
PBFT-based agreement among four servers to establish a common batch record for
the current set of sensor submissions. CRSF also incorporates secret sharing
into the filter gate design: each filter gate outputs a server-specific share
rather than the complete circuit-input label.

We first present the setup assumptions and offline preparation in Section~\ref{sec:setupandpreparation}. The online protocol is then organized into two stages: Section~\ref{sec:submission-agreement} describes sensor submission collection and PBFT-based agreement, and Section~\ref{sec:labelreconstruction} presents the subsequent online evaluation phase.

\subsection{Setup and Offline Preparation}\label{sec:setupandpreparation}

\noindent\textbf{Setup and authentication material.}
The protocol uses a collision-resistant hash function $H:\{0,1\}^\ast\rightarrow\{0,1\}^{\kappa}$~\cite{rogaway2004cryptographic}, an EUF-CMA secure digital signature scheme $\mathsf{Sig}=(\mathsf{KeyGen},\mathsf{Sign},\mathsf{Verify})$~\cite{goldwasser1988digital}, and a keyed pseudorandom label-derivation function $\mathsf{PRG}$. Each sensor $\sensori{i}$ holds a signing key $\mathsf{sk}_{s_i}$, and its verification key $\mathsf{vk}_{s_i}$ is known to all servers. Each server $\server{h}$ holds a signing key $\mathsf{sk}_{\server{h}}$, and its verification key $\mathsf{vk}_{\server{h}}$ is known to the client and the other servers. Each sensor also shares a pairwise secret key with the client for deriving sensor-side input labels.

\vspace{2mm}
\noindent\textbf{Offline preparation.} The client prepares the garbled tables and label material required for the online protocol (detailed in Fig.~\ref{fig:crsf-offline-functions}). It distributes the garbled fusion circuit and checking gates to all servers, but each server receives only its server-specific filter gates.

\begin{figure*}[t]
\centering
\scriptsize
\setlength{\fboxsep}{6pt}
\fbox{%
\begin{minipage}{0.96\textwidth}
\begin{minipage}[t]{0.52\textwidth}
\raggedright
\setlength{\parindent}{0pt}

\textbf{procedure} \textsc{PrepCRSF}$(C_{\mathsf{fus}},\{\ell_i\}_{i \in [n]})$:\par
\vspace{2pt}\par
\hspace*{1em}
$\offset \leftarrow \{0,1\}^{\kappa-1}1$\par
\hspace*{1em} \textbf{for} each sensor $\sensori{i}$ \textbf{do}\par
\hspace*{2em} establish seed $K_i$ with $\sensori{i}$\par
\hspace*{2em} \textbf{for} $j\in\{0,\ldots,\ell_i-1\}$ and $b\in\{0,1\}$ \textbf{do}\par
\hspace*{3em} $\sensorLabel{i}{j}{b}\gets \mathsf{PRG}(K_i,i,j,b)$\par

\vspace{1pt}
\hspace*{1em} \textbf{for} each $i,j,h$ \textbf{do}\par
\hspace*{2em} sample $\clientLabel{i}{j}{\honest}{h}$ and $\clientLabel{i}{j}{\malicious}{h}$\par

\vspace{1pt}
\hspace*{1em} \textbf{for} each $i,j$ \textbf{do}\par
\hspace*{2em} $(W_{i,j}^{0},W_{i,j}^{1},\{W_{i,j}^{b,(h)}\}_{b,h})$ \par
\hspace*{2em} $\gets \textsc{ShareWire}(\offset)$\par

\vspace{1pt}
\hspace*{1em} \textbf{for} each $i,j,h$ \textbf{do}\par
\hspace*{2em} $\widehat F_{i,j}^{h}\gets \textsc{GbFilt}(\clientLabel{i}{j}{\honest}{h},\clientLabel{i}{j}{\malicious}{h},$\par
\hspace*{3em} $\sensorLabel{i}{j}{0},\sensorLabel{i}{j}{1}, W_{i,j}^{0,(h)},W_{i,j}^{1,(h)})$\par

\hspace*{1em} \textbf{for} each $\sensori{i}$ and $t\in\{0,\ldots,\lceil \ell_i/2\rceil-1\}$ \textbf{do}\par
\hspace*{2em} $j\gets 2t$\par
\hspace*{2em} \textbf{if} $j+1<\ell_i$ \textbf{then}\par
\hspace*{3em} $\widehat G_{i,t}^{(1)}\gets \textsc{GbChk}(\sensorLabel{i}{j}{0},\sensorLabel{i}{j}{1},\sensorLabel{i}{j+1}{0},\sensorLabel{i}{j+1}{1})$\par
\hspace*{3em} $\widehat G_{i,t}^{(2)} \gets $\par \hspace*{5em} $\textsc{GbChk}(W_{i,j}^{0},W_{i,j}^{1},W_{i,j+1}^{0},W_{i,j+1}^{1})$\par
\hspace*{2em} \textbf{else}\par
\hspace*{3em} $\widehat G_{i,t}^{(1)}\gets \textsc{GbChk1}(\sensorLabel{i}{j}{0},\sensorLabel{i}{j}{1})$\par
\hspace*{3em} $\widehat G_{i,t}^{(2)}\gets \textsc{GbChk1}(W_{i,j}^{0},W_{i,j}^{1})$\par

\vspace{1pt}

\hspace*{1em} $\widehat C_{\mathsf{fus}}\leftarrow \mathsf{Gb}(C_{\mathsf{fus}};\offset,\{W_{i,j}^0,W_{i,j}^1\}_{i,j})$\par
\hspace*{1em} send $\widehat C_{\mathsf{fus}}$ and $\{\widehat G_{i,t}^{(1)},\widehat G_{i,t}^{(2)}\}_{i,t}$ to all servers \par
\hspace*{1em} \textbf{for} each server $\server{h}$ \textbf{do}\par
\hspace*{2em} send $\{\widehat F_{i,j}^{h}\}_{i,j}$ to $\server{h}$\par

\end{minipage}
\hfill
\begin{minipage}[t]{0.5\textwidth}
\raggedright
\setlength{\parindent}{0pt}

\textbf{procedure} \textsc{ShareWire}$(\offset)$:\par
\vspace{2pt}
\hspace*{1em} sample $W^0\gets\{0,1\}^{\kappa}$\par
\hspace*{1em} set $W^1\gets W^0\oplus\offset$\par
\hspace*{1em} \textbf{for} $b\in\{0,1\}$ \textbf{do}\par
\hspace*{2em} $(W^{b,(1)},W^{b,(2)},W^{b,(3)},W^{b,(4)})$\par
\hspace*{3em} $\gets \mathsf{Share}_{3,4}^{\mathrm{byte}}(W^b)$\par
\hspace*{1em} \textbf{return} $(W^0,W^1,\{W^{b,(h)}\}_{b,h})$\par

\vspace{8pt}
\textbf{procedure} \textsc{GbFilt}$(C^{\mathsf{Hon}},C^{\mathsf{Mal}},S^0,S^1,\omega^0,\omega^1)$:\par
\vspace{2pt}
\hspace*{1em} define mapping $\mathcal{T}_F$ by\par
\hspace*{2em} $\mathcal{T}_F(C^{\mathsf{Hon}},S^0)=\omega^0$\par
\hspace*{2em} $\mathcal{T}_F(C^{\mathsf{Hon}},S^1)=\omega^1$\par
\hspace*{2em} $\mathcal{T}_F(C^{\mathsf{Mal}},\varnothing)=\omega^1$\par
\hspace*{1em} \textbf{return} $\mathsf{GbGate}(\mathcal{T}_F)$\par

\vspace{8pt}
\textbf{procedure} \textsc{GbChk}$(A^0,A^1,B^0,B^1)$:\par
\vspace{2pt}
\hspace*{1em} define mapping $\mathcal{T}_G$ by\par
\hspace*{2em} \textbf{for} $b,b'\in\{0,1\}$ \textbf{do}\par
\hspace*{3em} $\mathcal{T}_G(A^b,B^{b'})=0^\kappa$\par
\hspace*{1em} \textbf{return} $\mathsf{GbGate}(\mathcal{T}_G)$\par

\vspace{8pt}
\textbf{procedure} \textsc{GbChk1}$(A^0,A^1)$:\par
\vspace{2pt}
\hspace*{1em} define mapping $\mathcal{T}_{G1}$ by\par
\hspace*{2em} $\mathcal{T}_{G1}(A^0)=0^\kappa$\par
\hspace*{2em} $\mathcal{T}_{G1}(A^1)=0^\kappa$\par
\hspace*{1em} \textbf{return} $\mathsf{GbGate}(\mathcal{T}_{G1})$\par

\end{minipage}
\end{minipage}%
}
\caption{CRSF offline preparation. The procedure $\mathsf{GbGate}$ garbles a special gate according to the specified input-label-to-output-label mapping.}
\label{fig:crsf-offline-functions}
\end{figure*}

\subsection{Agreement on Sensor Submissions} \label{sec:submission-agreement}

Online phase of CRSF starts with a PBFT-based evidence-supported agreement phase among the four servers. This phase establishes a common submission-level record for the current session. For clarity, we describe the agreement rule for an individual sensor submission. For each sensor $\sensori{i}$, the servers determine a common submission-level outcome $o_i \in \{(\mathsf{Hon},r_i),\mathsf{Mal}\}$ before label validation and fusion evaluation. The labels $\mathsf{Hon}$ and $\mathsf{Mal}$ denote protocol-level acceptance states: $\mathsf{Hon}$ means that the servers agree to use $r_i$ as the submitted vector for $\sensori{i}$, whereas $\mathsf{Mal}$ means that $\sensori{i}$ is excluded from the current session. In implementation, the servers batch the agreement for all sensors in the same session by agreeing on an outcome vector $O=(o_1,\ldots,o_n)$. Let $f=1$. The four-server setting tolerates one Byzantine server, and each PBFT quorum has size $2f+1=3$.

\vspace{2mm}
\noindent\textbf{Sensor-side submission.}
Let $\mathsf{sid}$ be a session identifier.
For each target server $\server{h}$, let $r_i^{(h)}$ be the complete input-label vector submitted by $\sensori{i}$ to $\server{h}$, and let $d_i^{(h)}=H(r_i^{(h)})$. The sensor signs the session identifier, the sensor identity, the target server identity, and the digest as $\pi_i^{(h)}=\mathsf{Sign}_{{\mathsf{sk}_{\sensori{i}}}}(\mathsf{sid},i,h,d_i^{(h)})$. The submission sent to $\server{h}$ is $\mathsf{sub}_i^{(h)}=(r_i^{(h)},\pi_i^{(h)})$.

\vspace{2mm}
\noindent\textbf{Server-side agreement.}
\begin{enumerate}[leftmargin=*,itemsep=1pt,topsep=1pt]
\item \textbf{Local reports.}
After the sensor-submission deadline $T_{\mathsf{sub}}$, each server forms one signed local report for $\sensori{i}$. If $\server{h}$ received a valid submission before $T_{\mathsf{sub}}$, it creates a submission report $R_i^{(h)}=(\mathsf{Sub},h,\mathsf{sub}_i^{(h)},\rho_i^{(h)})$, where $\rho_i^{(h)}=\mathsf{Sign}_{{\mathsf{sk}_{\server{h}}}}$ $(\mathsf{sid},i,h,\mathsf{Sub},H(\mathsf{sub}_i^{(h)}))$. If $\server{h}$ did not receive a valid submission before $T_{\mathsf{sub}}$, it creates a no-submission report $R_i^{(h)}=(\mathsf{NoSub},h,T_{\mathsf{sub}},\rho_i^{(h)})$, where $\rho_i^{(h)}=\mathsf{Sign}_{{\mathsf{sk}_{\server{h}}}}(\mathsf{sid},i,h,\mathsf{NoSub},T_{\mathsf{sub}})$. The primary also forms its own local report, and each non-primary server sends its local report to the primary of the current view. The view identifies the current primary for this agreement instance.
\item \textbf{Evidence rule.}
The primary constructs an evidence set $E_i$ consisting of $2f+1=3$ valid reports from distinct servers. A report is valid if the server signature verifies, the server identity is distinct within $E_i$, and every embedded sensor submission verifies under $\mathsf{vk}_{\sensori{i}}$. The outcome is derived deterministically from $E_i$. If $E_i$ contains at least $f+1=2$ valid submission reports with the same input-label vector $r_i$, then the only valid outcome is $o_i=(\mathsf{Hon},r_i)$. Otherwise, the derived outcome is $o_i=\mathsf{Mal}$.
\item \textbf{PBFT-based agreement.}
\begin{itemize}[leftmargin=*,itemsep=2pt,topsep=2pt]
    \item \emph{Pre-prepare.} The primary broadcasts a pre-prepare proposal \\
$\langle \mathsf{PRE}$ $\mathsf{\text{-}PREPARE},\mathsf{sid},i,\mathsf{view},o_i,E_i\rangle_{\server{p}}$.
A server accepts a pre-prepare message only if the primary signature verifies, the instance identifier is correct, $E_i$ is a valid evidence set, and $o_i$ is exactly the outcome derived from $E_i$. Thus, a proposal for $\mathsf{Mal}$ is invalid if $E_i$ contains two matching valid submissions, and a proposal for $(\mathsf{Hon},r_i)$ is invalid if $E_i$ does not contain two matching valid submissions for $r_i$.
\item \emph{Prepare and commit.}
After accepting a valid pre-prepare proposal, a server broadcasts a prepare
message for the digest $H(o_i,E_i)$. A server becomes prepared for this digest
after it gets three matching prepare messages from distinct servers, including the pre-prepare message from the primary. Once
prepared, the server broadcasts a commit message for the same digest. It decides
$o_i$ only after obtaining three matching commit messages for this digest.

\end{itemize}
\end{enumerate}

If the primary fails to send a valid pre-prepare message before timeout, or if equivocation prevents a prepare quorum, backups invoke a view-change procedure. The new primary preserves any prepared value required by PBFT safety. Otherwise, it constructs a fresh evidence set from three distinct servers' reports and proposes the deterministically derived outcome.

Under PBFT safety, all honest servers decide the same evidence-supported outcome for each sensor: either $(\mathsf{Hon},r_i)$ for a common accepted input-label vector, or $\mathsf{Mal}$ for exclusion. In addition, the evidence rule prevents a delivered honest submission from being excluded. Under the sensor-delivery assumption, if $\sensori{i}$ is honest and participates, then all honest servers receive the same valid submission from $\sensori{i}$ before $T_{\mathsf{sub}}$. Since at most one server is Byzantine, every evidence set of three distinct valid reports contains at least two honest reports, and these reports carry the same submission vector and digest. Therefore, the deterministic evidence rule derives $o_i=(\mathsf{Hon},r_i)$, rather than $\mathsf{Mal}$.

\subsection{Status-Dependent Label Release and Reconstruction}\label{sec:labelreconstruction}

Following the PBFT-based agreement in Section~\ref{sec:submission-agreement}, the servers hold a common outcome vector $O=(o_1,\ldots,o_n)$. This vector identifies accepted submissions, but does not validate whether the included sensor-side input labels are well-formed. The first-layer checking gates perform this validation, yielding a status vector $u=(u_1,\ldots,u_n)$. Each entry determines whether the client releases the $\honest$ or $\malicious$ branch of the corresponding client-side filter-gate labels. The remainder of this section details the online evaluation phase: status derivation, server-specific label release, reconstruction, and final fusion-circuit evaluation.

\vspace{2mm}
\noindent\textbf{Online Evaluation.}
\begin{enumerate}[leftmargin=*,itemsep=2pt,topsep=2pt]
    \item \textbf{First-layer checking.}
For each sensor $\sensori{i}$, if $o_i=(\mathsf{Hon},r_i)$, each server evaluates the first-layer checking gates on the sensor-side input labels in $r_i$. If all checking-gate outputs are $0^\kappa$, the status of $\sensori{i}$ is set to $\honest$; otherwise, it is set to $\malicious$.  Thus, each server derives a status vector $u=(u_1,\ldots,u_n)$ according to
$$
u_i =
\begin{cases}
\honest, & \text{if } o_i=(\mathsf{Hon}, r_i) \text{ and } r_i \text{ passes the first-layer checking gates},\\
\malicious, & \text{otherwise.}
\end{cases}
$$

\item \textbf{Filter gate transformation.}
\begin{itemize}[leftmargin=*,itemsep=2pt,topsep=2pt]
    \item \emph{Label release.}
Each server sends the resulting status vector $u = (u_1,\ldots,u_n)$ to the client. The client accepts a status vector only if at least three servers report the same vector; otherwise, the client aborts. Let $u^\star$ denote this accepted majority status vector. Once such a $3$-of-$4$ majority status vector is obtained, the client sends to each server its server-specific client-side filter-gate labels corresponding to this majority vector. In particular, for each sensor and input position, the client releases only one branch of client-side labels and never releases both the $\honest$ and $\malicious$ branch labels.
\item \emph{Evaluation.} The servers then enter the filter-gate layer. For each server $\server{h}$, each sensor $\sensori{i}$, and each input position $j$, server $\server{h}$ evaluates its server-specific filter gate using the released client-side label and the corresponding sensor-side input label if $u_i^\star=\honest$, or using the $\malicious$ branch otherwise. The output is not a complete circuit-input label. Instead, it is a server-specific share of the circuit-input label determined by the accepted status branch and the corresponding sensor-side input label.
\end{itemize}
\item \textbf{Reconstruction with checking.} The servers exchange these shares and reconstruct candidate circuit-input labels from triples of shares. Since each complete circuit-input label is shared using byte-wise $(3,4)$ Shamir sharing, any three valid shares determine one candidate label. A server accepts a reconstructed candidate label set only if it passes the check of another layer checking gates, which we call the second-layer checking gates. If no candidate passes the check, the server aborts. All the checked and validated circuit-input label sets are used as the input to the fusion circuit evaluation.
\item \textbf{Fusion circuit evaluation.} Finally, each server evaluates the garbled fusion circuit on the validated circuit-input labels and obtains an output-label vector. The servers send their output-label vectors to the client.
\end{enumerate}

The client accepts an output-label vector only if it receives the same vector from at least three servers; otherwise, it aborts. It then decodes the accepted vector using its output decoding information and outputs the fusion result. This completes the online execution of CRSF.

\section{Security Analysis}
\label{sec:security-guarantees}

We now analyze the security guarantees provided by CRSF under the adversarial model of Section~\ref{sec:systemModel}. We assume that the client honestly generates the setup material and that public verification keys are distributed correctly. Cryptographically, we assume that $H$ is collision resistant, $\mathsf{Sig}$ is EUF-CMA secure, $\mathsf{PRG}$ is pseudorandom, the garbling scheme satisfies privacy, obliviousness, and authenticity, and the byte-wise $(3,4)$ Shamir sharing scheme satisfies correctness and privacy. For liveness, we additionally assume the standard PBFT liveness conditions, including authenticated communication, partial synchrony, and eventual delivery of messages among honest parties.

We begin by defining the corrupted coalition's view and the leakage available to the simulator, which are then used to state and prove the privacy theorem. We then establish correctness with explicit abort and liveness.

Let $\widehat C_{\mathsf{fus}}$ denote the common garbled fusion circuit, and let $\widehat{\mathcal G}^{(1)}$ and $\widehat{\mathcal G}^{(2)}$ denote the common first- and second-layer checking gates. All these common gates are known to all the servers. For each server $\server{h}$, let $\widehat{\mathcal F}_h$ denote its server-specific filter gates. We write $\widehat{\mathcal F}_q$ for the filter gates in the corrupted server's view. Let $\Phi_{\mathsf{fus}}$ denote the public structural leakage of the garbled fusion circuit, such as its topology, size, input length, and output length. Let $\Phi_{\mathsf{aux}}$ denote the public structural leakage of the checking and filter layers, including the number and arity of checking gates and filter gates, the input lengths $\{\ell_i\}_{i \in [n]}$, and the public wiring pattern between these auxiliary gates and the fusion-circuit input wires.

Fix an adversary $\mathcal A$ that corrupts server $\server{q}$ and a set of sensors. Let $\mathsf{View}_{\mathcal A,q}$ denote the joint view of the corrupted coalition. This view consists of the corrupted parties' inputs, randomness, and local states; the corrupted sensors' submitted labels; the messages sent and received by $\server{q}$; the PBFT-based agreement transcript visible to $\server{q}$; the status-vector and client-side label-release transcript; the garbled artifacts visible to $\server{q}$, namely $\widehat C_{\mathsf{fus}}$, $\widehat{\mathcal G}^{(1)}$, $\widehat{\mathcal G}^{(2)}$, and $\widehat{\mathcal F}_q$; the checking- and filter-gate outputs obtained by $\server{q}$; the reconstruction shares received by $\server{q}$; the reconstructed circuit-input labels visible to $\server{q}$; and, if the protocol reaches the final evaluation phase, the output-label vector obtained from evaluating the garbled fusion circuit. The client-side decoding information is not included in $\mathsf{View}_{\mathcal A,q}$.

\begin{definition}[Coalition leakage]
For an adversary $\mathcal A$ corrupting server $\server{q}$ and a set of sensors, define
\[
L^{\mathsf{coal}}_{\mathcal A,q}
=
(q,\mathsf{inp}_{\mathcal A,q},\mathsf{rand}_{\mathcal A,q},
\Phi_{\mathsf{fus}},\Phi_{\mathsf{aux}},\mathsf{type}(O),u^\star,\mathsf{abortinfo}_q).
\]
Here $\mathsf{inp}_{\mathcal A,q}$ and $\mathsf{rand}_{\mathcal A,q}$ are the corrupted parties' inputs and randomness, $\mathsf{type}(O)$ keeps only the $\mathsf{Hon}/\mathsf{Mal}$ type of each agreed outcome and removes every accepted vector $r_i$, and $\mathsf{abortinfo}_q$ records the public abort point and visible transcript shape before abort.
\end{definition}

Let $\mathsf{pp}$ denote the public parameters generated during setup, including public verification keys, public cryptographic descriptions, session identifiers, and public structural information.

\begin{theorem}[Privacy]
\label{thm:privacy-collusion-server}
Under the stated adversarial, setup, and cryptographic assumptions, for every PPT adversary $\mathcal A$ corrupting one server $\server{q}$ and any set of corrupted sensors within the fault threshold tolerated by the underlying fusion algorithm, there exists a PPT simulator $\mathcal S$ such that
\[
(\mathsf{pp},\mathsf{View}_{\mathcal A,q}) \approx_c \mathcal S(1^\kappa,L^{\mathsf{coal}}_{\mathcal A,q}).
\]
Hence, the corrupted coalition learns neither honest sensors' measurements nor the plaintext fusion output beyond the coalition leakage $L^{\mathsf{coal}}_{\mathcal A,q}$.
\end{theorem}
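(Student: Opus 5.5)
We will prove Theorem~\ref{thm:privacy-collusion-server} with a sequence of hybrid games. The first game is the real view $(\mathsf{pp},\mathsf{View}_{\mathcal A,q})$ and the last is the output of an explicit simulator $\mathcal S$ that depends only on $L^{\mathsf{coal}}_{\mathcal A,q}$. The simulator runs $\mathcal A$ internally and plays the honest client, the three honest servers, and the honest sensors.

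Using $\mathsf{inp}_{\mathcal A,q}$ and $\mathsf{rand}_{\mathcal A,q}$, $\mathcal S$ regenerates everything the coalition derives itself: the corrupted sensors' seeds $K_i$ (shared with the client), their labels $\sensorLabel{i}{j}{b}$, and their signatures. For honest sensors it samples fresh keys, and it submits random label vectors in place of PRG outputs, with honestly generated signatures. The PBFT transcript is simulated by running the honest servers' code on these submissions. The types of the resulting outcomes match $\mathsf{type}(O)$ because the evidence rule is deterministic: an honest delivered submission always yields $\mathsf{Hon}$, and corrupted sensors' outcomes are determined by $\mathcal A$'s own messages. Honest servers' status vectors are set to $u^\star$, and abort points follow $\mathsf{abortinfo}_q$.

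The hybrids proceed as follows.
\begin{enumerate}
\item Replace $\mathsf{PRG}(K_i,\cdot)$ by truly random labels for every honest $\sensori{i}$. This step relies on pseudorandomness, since $K_i$ is unknown to the coalition.
\item Replace every real digital signature of an honest party by a simulated one, and argue that, except with negligible probability, $\mathcal A$ cannot inject a forged honest-sensor submission or honest-server report. This step uses EUF-CMA security together with the collision resistance of $H$. As a consequence, the agreed outcomes of honest sensors carry exactly the honest vectors.
\item Handle the filter-gate layer of $\server{q}$. For each wire, $\server{q}$ receives only one client branch, namely the one selected by $u^\star$. Even for a corrupted sensor holding both $\sensorLabel{i}{j}{0}$ and $\sensorLabel{i}{j}{1}$, the coalition can therefore decrypt $\widehat F^{q}_{i,j}$ to the two shares $\circShare{i}{j}{0}{q}$ and $\circShare{i}{j}{1}{q}$. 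Since these are single $(3,4)$ shares, they are statistically independent of $W^0_{i,j}$, $W^1_{i,j}$, and hence of $\offset$. Meanwhile, the unopened rows of $\widehat F^{q}$ are replaced by simulated rows using the security of $\mathsf{GbGate}$.
\item Handle reconstruction. The shares sent by honest servers correspond to the single label fixed by the agreed vector and $u^\star$. By Shamir privacy, $\mathcal S$ can sample $\server{q}$'s view of the shares as random subject to interpolating to one active label per wire.
\item Replace $(\widehat C_{\mathsf{fus}},\{\text{active labels}\})$ together with the second-layer checking gates by $\mathsf{Sim}^{\mathsf{obv}}_{\mathsf{GC}}(1^\kappa,\Phi_{\mathsf{fus}})$, extended with $\Phi_{\mathsf{aux}}$ for the checking gates. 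Here we use obliviousness: the decoding information $\mathsf{d}$ is never given to the coalition, so the output labels reveal nothing about the plaintext result.
\end{enumerate}

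The main obstacle is the step in which we invoke obliviousness (step 5). That reduction requires the coalition to hold exactly one label per fusion-circuit input wire and no information about $\offset$, and this is precisely the property the attack on PG in Section~\ref{sec:pg-limitations} violated. We will argue it as follows. Since the corrupted server sees only its own filter gates, it obtains at most one full label per wire: the one reconstructed from honest servers' shares for the agreed branch. Its own possible second share for the opposite semantic value is a single Shamir share, and we must show it is independent of everything else. The delicate point is that the complementary shares $\circShare{i}{j}{1-b}{h}$ held by honest servers are never sent. This holds because honest servers evaluate only the branch fixed by the agreed $r_i$ and $u^\star$. Since PBFT safety guarantees a unique $r_i$, an honest server never reconstructs both values.

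We would formalize this as a lemma: with overwhelming probability, the coalition's view of $\{W^{0}_{i,j},W^{1}_{i,j}\}$ is a function of one label per wire plus independent uniform shares. A consequence of the lemma is that the first- and second-layer checking gates leak only validity bits, and those bits are already contained in $u^\star$ and $\mathsf{abortinfo}_q$.

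Summing the negligible losses over the polynomially many hybrids yields the claimed computational indistinguishability.
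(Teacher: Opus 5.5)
Your proposal is correct and follows essentially the same simulation strategy as the paper. The simulator runs $\mathcal A$ on simulated honest-sensor labels. It uses PRG security, signature unforgeability and collision resistance for the submission and PBFT layer. It uses below-threshold $(3,4)$ Shamir privacy to replace the coalition's single visible share of each complementary circuit-input label by a uniform string, which is the paper's hybrid $H_1$. Finally, it applies the obliviousness simulator to the fusion circuit, with the filter gates, reconstruction shares and second-layer checking gates programmed so that reconstruction yields exactly the simulated active labels.

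The differences are organizational. You split the paper's single hybrid $H_2$ into separate PRG, signature and obliviousness steps. You also make explicit, via PBFT safety and one-branch label release, why honest servers never send shares of the complementary label, a point the paper leaves implicit.
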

\begin{proof}[sketch]
The proof is by simulation. Given $L^{\mathsf{coal}}_{\mathcal A,q}$, the simulator constructs a simulated outcome vector $\widetilde O$ with $\mathsf{type}(\widetilde O)=\mathsf{type}(O)$. For honest-sensor $\mathsf{Hon}$ entries, it samples simulated submitted vectors $\widetilde r_i$ from the pseudorandom label distribution and uses them consistently in the simulated submissions, reports, evidence sets, first-layer checking gates, and filter-gate inputs; corrupted sensors and the corrupted server generate their messages by running $\mathcal A$. The PBFT-based transcript is generated from this simulated submission--report--evidence chain, with prepare/commit messages bound to $H(\widetilde o_i,\widetilde E_i)$. The simulator then invokes the garbled-circuit obliviousness simulator only for the final garbled fusion circuit, obtaining $(\widetilde{\widehat C}_{\mathsf{fus}},\widetilde X_{\mathsf{fus}},\widetilde Y_{\mathsf{fus}})$, and coherently regenerates the filter-gate outputs, reconstruction shares, second-layer checking gates, and output-label messages so that reconstruction yields exactly $\widetilde X_{\mathsf{fus}}$. For complementary circuit-input labels, the simulator replaces the coalition's visible shares with independently generated uniform field-element strings of the same form. Indistinguishability follows from the privacy guarantee of Shamir secret sharing, PRG security for honest sensor-side labels, garbled-circuit obliviousness for the final fusion circuit, signature unforgeability, collision resistance, and garbled-gate authenticity. Hence the simulated view is computationally indistinguishable from $(\mathsf{pp},\mathsf{View}_{\mathcal A,q})$.
\qed
\end{proof}

Appendix~\ref{app:securityproofs} gives the full simulation proof for Theorem~\ref{thm:privacy-collusion-server}.

\begin{theorem}[Correctness with Explicit Abort]
\label{thm:correctness-explicit-abort}
Under the stated adversarial and cryptographic assumptions, the honest client either outputs $\bot$ or accepts the correct output-label vector for the uniquely determined circuit-input label vector, except with negligible probability.
\end{theorem}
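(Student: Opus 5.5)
We prove the statement by following one honest execution from the agreement phase through to output acceptance, and showing that at every point the honest servers hold the same, uniquely determined values. We then argue that the single Byzantine server $\server{q}$, even with help from corrupted sensors, cannot make the client accept anything other than $\bot$ or the correct output-label vector, except with negligible probability.

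The first step pins down the inputs. By PBFT safety, all three honest servers decide the same outcome vector $O$. EUF-CMA security of $\mathsf{Sig}$ and collision resistance of $H$ guarantee that every accepted $r_i$ was really signed by $\sensori{i}$ and is bound to a unique digest. Every entry of $u$ is a deterministic function of $O$ and the common first-layer checking gates $\widehat{\mathcal G}^{(1)}$, so all honest servers compute the same status vector $u$. Since the client requires three matching reports and there are three honest servers, the accepted vector $u^\star$ is either exactly this $u$ or the client aborts. Any two $3$-subsets of the four servers intersect in at least two servers, at least one of which is honest, so no other vector can reach a $3$-of-$4$ majority. From $O$ and $u^\star$ we define the uniquely determined circuit-input label vector $X^\star$ as follows.
\begin{itemize}
\item If $u^\star_i=\honest$, the entry for position $j$ of $\sensori{i}$ is $W_{i,j}^{b}$, where $b$ is the bit whose label $\sensorLabel{i}{j}{b}$ appears in $r_i$; this is well defined because the label passed the checking gates.
\item If $u^\star_i=\malicious$, the entry is $W_{i,j}^{1}$.
\end{itemize}

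The second step handles the filter gates and reconstruction. This is the main obstacle, because the Byzantine server contributes shares to every triple. Each honest server evaluates its own filter gate $\widehat F_{i,j}^{h}$ on the released client-side label and the agreed sensor-side label. By correctness of $\mathsf{GbGate}$, it obtains the true share $W_{i,j}^{b,(h)}$ of the corresponding entry of $X^\star$. By correctness of byte-wise $(3,4)$ Shamir sharing, the triple of honest shares reconstructs exactly $X^\star$.

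Next we must rule out a different candidate passing the second-layer checks. A candidate built with a share from $\server{q}$ either equals $X^\star$ or yields, on some wire, a string outside $\{W_{i,j}^0,W_{i,j}^1\}$ or the complementary label. Here the plan is as follows.
\begin{itemize}
\item We invoke authenticity of the garbled checking gates $\widehat{\mathcal G}^{(2)}$: a string that is not a valid label makes the gate output $0^\kappa$ only with negligible probability.
\item A valid complementary label $W_{i,j}^{1-b}$ would require $\server{q}$ to know it, which contradicts privacy. Only one client branch is ever released, each server holds only its own filter gate, and two shares of the complement are uniform by Shamir privacy. So the coalition guesses it with probability about $2^{-\kappa}$.
\end{itemize}
The residual case is a valid label with the wrong semantics on a wire not checked pairwise. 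We handle it by reducing to the same unpredictability of unreleased labels, since forging such a label is again guessing a uniform string. For the uniqueness argument, we assume the servers take the candidate produced by an honest triple when several candidates pass.

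The third step handles fusion evaluation and acceptance. Each honest server evaluates $\widehat C_{\mathsf{fus}}$ on $X^\star$ and obtains $Y^\star=\mathsf{Ev}(\widehat C_{\mathsf{fus}},X^\star)$. The client accepts a vector only if three servers send the same one, and at least two of those must be honest. So any accepted vector equals $Y^\star$, and otherwise the client outputs $\bot$. Decoding $Y^\star$ gives $C_{\mathsf{fus}}$ evaluated on the protocol-determined inputs and default replacements, by correctness of the garbling scheme. Finally, a union bound over the negligible failure events completes the proof: signature forgery, hash collision, checking-gate authenticity failure, and guessing an unreleased label.
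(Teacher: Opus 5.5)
Your proposal is correct and follows the same chain as the paper's proof: PBFT safety gives a common $O$, deterministic first-layer checks give a common status vector, the $3$-of-$4$ rule forces $u^\star$ to equal it, honest filter-gate evaluations yield shares of a single label vector $X^\star$, and the client's $3$-of-$4$ rule on output vectors (at least two honest senders) finishes. It is also more explicit than the paper about why a Byzantine share cannot make a wrong reconstructed candidate pass the second-layer checks.

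Two small repairs are needed. First, drop the assumption that servers select the honest-triple candidate: an honest server cannot identify that triple, and your bullets already show that every passing candidate equals $X^\star$ except with negligible probability. Second, justify the unpredictability of $W_{i,j}^{1-b}$ computationally rather than by a $2^{-\kappa}$ Shamir bound. Both $\widehat C_{\mathsf{fus}}$ and $\widehat{\mathcal G}^{(2)}$ are garbled with $W_{i,j}^{1-b}$ (recovering it would expose $\Delta$), so Shamir privacy of the coalition's single share of it is not enough on its own.
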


\begin{proof}
PBFT safety and the deterministic evidence rule imply that all honest servers that continue the protocol use the same outcome vector $O$. The first-layer checking gates are deterministic, so the honest servers derive the same final status vector. Since the client releases filter-gate labels only after receiving a three-server majority on a status vector, and at most one server is Byzantine, the accepted vector $u^\star$ equals the honest servers' final status vector. Therefore, all honest servers evaluate the filter gates according to the same status vector and generate shares for the same uniquely determined circuit-input label vector. The reconstruction procedure either recovers this vector and validates it through the second-layer checking gates, or aborts. Once reconstruction succeeds, all honest servers evaluate the same final garbled circuit on the same input-label vector. A Byzantine server cannot create a different three-server majority, and garbled-circuit authenticity prevents the client from accepting an invalid output-label vector except with negligible probability.
\qed
\end{proof}

\begin{theorem}[Liveness]
\label{thm:live}
Under the stated adversarial assumptions and PBFT liveness conditions, the honest client eventually outputs either a decoded fusion result or $\bot$.
\end{theorem}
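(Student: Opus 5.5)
The plan is to walk through the online protocol phase by phase and show that each phase either terminates within bounded time (after global stabilization) or triggers an explicit abort that the client observes. The client's only terminal actions are decoding an accepted output-label vector or outputting $\bot$. Liveness therefore reduces to two claims: every honest server eventually either sends its output-label vector or aborts, and the client has a rule that terminates in both cases.

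\textbf{Submission phase.} The sensor-to-server links are synchronous with deadline $T_{\mathsf{sub}}$, so after $T_{\mathsf{sub}}$ every honest server deterministically forms a signed local report for each sensor ($\mathsf{Sub}$ or $\mathsf{NoSub}$). This holds regardless of how corrupted sensors behave, since missing or invalid submissions produce $\mathsf{NoSub}$ reports.

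\textbf{Agreement phase.} There are three honest servers and $2f+1=3$, so an honest primary can always assemble a valid evidence set $E_i$ from honest reports. The outcome derived from it passes every honest server's pre-prepare validation. If the primary is Byzantine and stalls or equivocates, the honest backups time out and invoke a view change. Among four servers with a rotating primary, at most one view in any four consecutive views has a faulty primary. Under partial synchrony with authenticated channels, standard PBFT liveness yields that after GST some view with an honest primary completes prepare and commit. Hence every honest server eventually decides $O$. I would argue that the evidence rule does not interfere with view-change liveness: any new honest primary can always form a fresh evidence set, or re-propose a prepared value, which is valid by PBFT's view-change rules.

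\textbf{Post-agreement phase.} First-layer checking is local and deterministic, so the three honest servers send identical $u$ to the client. Under eventual delivery, the client either gets a $3$-of-$4$ match and releases labels, or, if it cannot obtain a match, it aborts with $\bot$. I need the client to have a timeout so that waiting is bounded. I would make this explicit: since three honest servers always agree, the match always arrives eventually, so the client never waits forever.

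Filter evaluation is local. For share exchange, each honest server receives at least the three honest shares, which by Shamir correctness reconstruct the correct label. That label passes the second-layer checking gates, possibly after trying all $\binom{4}{3}$ triples, a bounded computation; otherwise the server aborts. Fusion evaluation is local and finite. The three honest servers then send identical output-label vectors, so the client eventually sees a $3$-match and decodes, or it aborts.

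\textbf{Main obstacle.} The delicate step is the interaction of the custom evidence-validity predicate with PBFT view change. I must ensure that a Byzantine server cannot perpetually block progress, for instance by withholding its report so that only honest reports are usable. Three honest reports suffice, so this does not block. I would argue that validity of proposals is always satisfiable by honest primaries, which is exactly the precondition of the PBFT liveness theorem that I invoke as a black box.
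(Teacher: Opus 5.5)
Your proposal is correct and follows the same route as the paper's proof. Both invoke PBFT liveness for the agreement instances, then observe that every later phase (checking, status reporting, label release, filter evaluation, share exchange, reconstruction, fusion evaluation, output submission) is a finite local computation or message exchange that either completes or aborts, with the three honest servers supplying the matching vectors the client needs. Your version is more careful than the paper's in two places: you justify that an honest primary can always satisfy the evidence-validity predicate, so PBFT liveness applies as a black box, and you show that the three honest shares always yield a candidate passing the second-layer check. The one step you leave implicit, which the paper's proof also leaves implicit, is that no triple containing a Byzantine share yields a different passing candidate; this holds only except with negligible probability.
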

\begin{proof}

PBFT liveness ensures that each sensor-submission agreement instance eventually decides an evidence-supported outcome, unless an explicit abort condition is triggered. If an abort is triggered, the client outputs $\bot$. Otherwise, after all agreement instances terminate, the remaining phases consist of finite local computations and finite message exchanges: first-layer checking, status reporting, client-side label release, filter-gate evaluation, share exchange, reconstruction, second-layer checking, final garbled-circuit evaluation, and output submission. Each phase either completes or triggers an explicit abort. If no abort is triggered, the client eventually receives three matching output-label vectors from the honest servers and decodes the accepted vector. Hence, the client eventually outputs either the decoded fusion result or $\bot$.
\qed
\end{proof}

\section{Evaluation}
\label{sec:evaluation}

We implemented CRSF in \texttt{C++}. Following PG~\cite{jin2024pg}, we evaluate five
Marzullo-style fault-tolerant fusion circuits in which each sensor contributes
a 16-bit value, and scale the number of sensors from 3 to at most 261. We used the same circuit files from the open source repository of PG\footnote{https://doi.org/10.6084/m9.figshare.25669026.v2}; for more detailed descriptions of these algorithms, please refer to~\cite{jin2024pg}. For authenticated
sensor submissions and inter-server PBFT messages, our implementation uses
SHA-256 for message digests and Ed25519 detached signatures via libsodium.
Concretely, we first compute a SHA-256 digest of each message and then generate
or verify an Ed25519 detached signature on that digest.

All distributed measurements were obtained on Google Cloud Platform in region
\texttt{us-central1}, zone \texttt{us-central1-a} (Iowa). Each virtual machine
is an \texttt{e2-standard-2} instance with 2~vCPUs, 8~GiB RAM, and a 40~GB
balanced persistent disk, running Ubuntu~22.04~LTS. CRSF experiments use five
VMs: four server nodes and one client-side node hosting the garbler
and software-emulated sensor processes. The PG baseline uses two VMs, one for
the client-side processes and one for the single server. All parties
communicate over the VPC network within the same zone.

Our experiments use software-emulated sensors rather than physical sensing
devices. Each emulated sensor produces the same protocol-level object as a real
sensor in CRSF: a 16-bit input value and an authenticated submission. The
evaluation therefore measures the online cost of the CRSF protocol execution,
including authenticated submission processing, PBFT-based agreement,
label reconstruction, and garbled-circuit evaluation. It does not model physical
sensor acquisition time, wireless-link variability, or embedded-device energy
consumption. All reported runtimes exclude the offline preparation phase.

\begin{figure}[t]
    \centering
    \includegraphics[width=\linewidth]{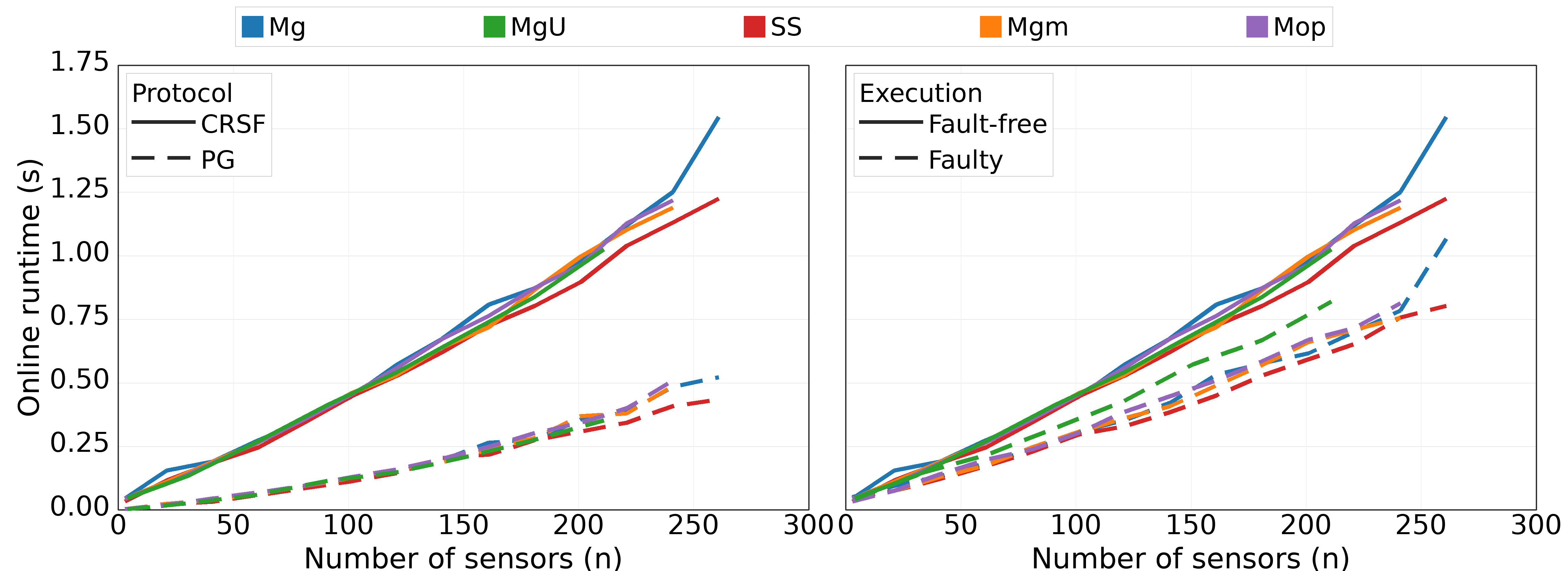}
    \caption{Online runtime of PG and CRSF under different execution settings.
    Colors denote the fusion algorithms. In the left plot, solid and dashed lines
    denote CRSF and PG, respectively. In the right plot, solid and dashed lines
    denote fault-free and faulty CRSF executions, respectively.}
    \label{fig:online-runtime}
\end{figure}

The left plot of Fig.~\ref{fig:online-runtime} compares the online runtime of PG and CRSF
as the number of sensors increases. Across all five fusion circuits, PG remains faster because it uses a single evaluator and does not incur the
inter-server PBFT agreement and threshold share reconstruction steps introduced
by CRSF. For a fixed four-server deployment, the runtime of CRSF grows approximately linearly with the number of
sensors, indicating predictable scaling despite these additional robustness
mechanisms.

The right plot of Fig.~\ref{fig:online-runtime} compares CRSF under fault-free and faulty executions. In the evaluated faulty setting, we instantiate the maximum tolerated sensor faults by having the corresponding faulty sensors omit their submissions, and instantiate one server fault by making one backup server non-responsive. After the non-responsive backup is detected, the primary proceeds with the remaining responsive servers, which are sufficient to form the required quorum. In this setting, the responsive servers process fewer sensor-submission records and perform fewer signature verifications during the agreement and validation steps. This explains why the measured online runtime can be lower than in the corresponding fault-free execution. This faulty execution represents one evaluated adversarial scenario rather than
an exhaustive fault-injection campaign. Overall, CRSF trades higher online runtime for Byzantine-resilient distributed
evaluation and robustness against sensor--server collusion.

\begin{table*}[t]
  \centering
  \caption{Communication volume and measured phase wall time at the largest tested circuit per algorithm.}
  \label{tab:crsf-comm-bytes-latency}
  \setlength{\tabcolsep}{5.2pt}
  \renewcommand{\arraystretch}{1.05}
  \begin{tabular}{@{}llc
      S[table-format=5.0,group-separator={\,}]
      S[table-format=3.1]
      S[table-format=4.0,group-separator={\,}]
      S[table-format=3.1]
      S[table-format=5.0,group-separator={\,}]
      S[table-format=4.1]@{}}
    \toprule
    & & & \multicolumn{2}{c}{BFT agreement}
          & \multicolumn{2}{c}{Reconstruction}
          & \multicolumn{2}{c}{Total} \\
    \cmidrule(lr){4-5}\cmidrule(lr){6-7}\cmidrule(l){8-9}
    Circuit & Scenario & $n$
      & {KB} & {ms}
      & {KB} & {ms}
      & {KB} & {ms} \\
    \midrule
    \multirow{2}{*}{Mg}  & Fault-free & \multirow{2}{*}{261} & 9891 & 488.0 & 1566 & 115.7 & 15492 & 1546.0 \\
                         & Faulty     &                      & 8365 & 474.9 &  783 &  39.0 & 12424 & 1066.3 \\
    \addlinespace[1pt]

    \multirow{2}{*}{MgU} & Fault-free & \multirow{2}{*}{211} & 7998 & 426.4 & 1266 & 122.9 & 12526 & 1025.2 \\
                         & Faulty     &                      & 6861 & 369.8 &  633 &  39.9 & 10181 &  819.4 \\
    \addlinespace[1pt]

    \multirow{2}{*}{SS}  & Fault-free & \multirow{2}{*}{261} & 9891 & 506.5 & 1566 & 138.2 & 15492 & 1224.2 \\
                         & Faulty     &                      & 8365 & 353.0 &  783 &  47.5 & 12424 &  802.2 \\
    \addlinespace[1pt]

    \multirow{2}{*}{Mgm} & Fault-free & \multirow{2}{*}{241} & 9134 & 481.7 & 1446 & 128.6 & 14305 & 1189.7 \\
                         & Faulty     &                      & 7725 & 323.5 &  723 &  34.3 & 11473 &  754.9 \\
    \addlinespace[1pt]

    \multirow{2}{*}{Mop} & Fault-free & \multirow{2}{*}{241} & 9134 & 472.3 & 1446 & 139.3 & 14305 & 1217.8 \\
                         & Faulty     &                      & 7725 & 325.9 &  723 &  45.7 & 11473 &  813.1 \\
    \bottomrule
  \end{tabular}
\end{table*}

Table~\ref{tab:crsf-comm-bytes-latency} further breaks down the main
CRSF-specific online costs at the largest tested instance for each fusion
algorithm. The PBFT-based submission agreement accounts for the largest
communication volume, ranging from about 7.8--9.7~MB in fault-free executions,
because each server exchanges authenticated evidence for batched sensor
submissions. Threshold share reconstruction is smaller in communication volume,
about 1.2--1.5~MB in the fault-free cases, but still contributes measurable
wall time because the servers exchange and validate shares before evaluating the
garbled fusion circuit. Across the five algorithms, the measured wall time of
the BFT agreement phase is below 0.51 seconds, and the reconstruction phase is
below 0.14 seconds in the fault-free setting.
Faulty executions have lower communication volume and wall time in this table, which is consistent with the performance trend shown in Fig.~\ref{fig:online-runtime}.

\section{Related Work}\label{sec:RelatedWork}

\noindent\textbf{Collusion-resilient sensing and aggregation.} Prior work has studied collusion resistance in several sensing and aggregation settings. Shi et al.~\cite{shi2010prisense} propose PriSense, a privacy-preserving aggregation scheme for people-centric urban sensing based on data slicing and mixing. PriSense supports additive and non-additive statistical queries and protects user privacy against a threshold number of colluding users and aggregation servers. Günther et al.~\cite{gunther2014privacy} revisit the PEPSI participatory-sensing model and show that cross-party collusion, such as collusion between a mobile node and a querier or between the service provider and a mobile node, can break earlier privacy guarantees. Their PEPSICo construction adds collusion resistance and supports optional aggregation for small sensor readings using an additively homomorphic identity-based encryption scheme. Yao et al.~\cite{yao2014privacy} study privacy-preserving SUM aggregation in two-tiered mobile wireless sensor networks and propose PDAAS and PDACAS. In particular, PDACAS protects the raw readings of mobile sensor nodes even when the cell header, acting as the aggregator, and the sink collude, while still allowing the sink to derive the SUM. These works motivate the relevance of collusion in sensing systems, but they differ from CRSF in both functionality and leakage surface. PriSense and PDACAS are privacy-preserving aggregation protocols: they hide individual users' or sensors' readings while enabling an aggregation server or sink to compute the intended aggregate. PEPSICo addresses a different participatory-sensing problem.
Complementary work studies secure and verifiable aggregation against malicious
aggregators and corrupted users or sensors~\cite{przydatek2003sia,palazzo2024privacy}.
These works address aggregation correctness or verifiable summation, whereas
CRSF targets garbled-circuit-based sensor fusion with sensor--server collusion
and Byzantine-robust participation agreement.

\vspace{2mm}
\noindent\textbf{Secure aggregation and federated learning.}
Our threat model and technical setting are fundamentally different from those considered in secure aggregation for federated learning~\cite{fereidooni2021safelearn,pasquini2022eluding,zhang2023safelearning,zhao2021sear}. Classical secure aggregation protocols are designed to hide each client’s local model update while allowing the server to recover an aggregate. These formulations usually involve only two roles, namely clients and a server, and the aggregate is intentionally revealed to the server. For example, efficient dropout-resilient aggregation~\cite{liu2022efficient} proposes a scalable protocol for secure sum computation, where secret sharing is used primarily to recover masks under client dropouts, and the server learns the final aggregate. FLock~\cite{chen2025flock} instead adopts a more MPC-style design with multiple aggregators, enabling richer aggregation functionalities at the cost of higher system complexity. In contrast, CRSF targets three-role sensor fusion, where the servers should learn neither individual sensor measurements nor the final fusion result, even under sensor--server collusion and Byzantine server behavior.

\section{Conclusion}\label{sec:conclusion}
We presented CRSF, a collusion-resilient privacy-preserving sensor-fusion
protocol for the sensor--server--client architecture.
It combines PBFT-based submission agreement with status-dependent label release
and byte-wise $(3,4)$ Shamir sharing of circuit-input labels, preventing a
Byzantine server from unilaterally controlling participation and preventing an
admissible sensor--server coalition from reconstructing complementary
circuit-input labels. We proved privacy, correctness with explicit abort, and liveness under the
stated assumptions. Our implementation and Google Cloud evaluation show that
CRSF introduces additional online runtime from agreement and reconstruction, but
retains predictable scaling across the evaluated fault-tolerant fusion circuits.
These results show that stronger robustness against sensor--server collusion and
Byzantine participation manipulation can be achieved with moderate online
overhead.

\section*{Acknowledgment}
Chenglu Jin is (partially) supported by project CiCS of the research programme Gravitation, which is (partly) financed by the Dutch Research Council (NWO) under the grant 024.006.037. Chao Yin is supported by the China Scholarship Council (CSC) and the Dutch Sector Plan. This work is supported in part by the National Natural Science Foundation of China under 62272043 and Yangtze Delta Region Institute of Tsinghua University, Zhejiang (No.LZZLX24F007).

\bibliographystyle{splncs04}
\bibliography{mybibliography}

\appendix
\section{Security Proofs}
\label{app:securityproofs}

\subsection{Proof of Theorem~\ref{thm:privacy-collusion-server}}
\label{app:proof-privacy-collusion-server}

\begin{proof}
We construct a simulator $\mathcal S$ and prove indistinguishability through two hybrids.

\paragraph{Simulator.}
Given $L^{\mathsf{coal}}_{\mathcal A,q}$, the simulator samples simulated setup material and runs $\mathcal A$ with $\mathsf{inp}_{\mathcal A,q}$ and $\mathsf{rand}_{\mathcal A,q}$. It constructs a simulated outcome vector $\widetilde O$ such that
$\mathsf{type}(\widetilde O)=\mathsf{type}(O)$.
For every honest-sensor entry whose leaked type is $\mathsf{Hon}$, it samples a fresh simulated submitted vector $\widetilde r_i$ from the pseudorandom label distribution and sets
$\widetilde o_i=(\mathsf{Hon},\widetilde r_i)$. The same $\widetilde r_i$ is used consistently in the simulated submissions, local reports, evidence sets, first-layer checking gates, and filter-gate inputs. For corrupted sensors and the corrupted server, all submissions, reports, and proposals are induced by running $\mathcal A$.

The PBFT-based agreement transcript is generated from the simulated submission layer: the simulator forms signed honest-server local reports, constructs or validates evidence sets $\widetilde E_i$, and derives
$\widetilde o_i=\mathsf{Derive}(\widetilde E_i)$
by the same deterministic evidence rule as the real protocol, and generates honest prepare/commit messages on
$H(\widetilde o_i,\widetilde E_i)$.
It then generates the first-layer checking behavior, signed status-vector messages, and client-side label-release transcripts consistently with $u^\star$. The client releases to the corrupted server only the server-specific client-side filter-gate labels corresponding to the accepted majority status vector $u^\star$.

The simulator invokes the garbled-circuit obliviousness simulator for the final garbled fusion circuit. It shares every simulated circuit-input label used in the simulated fusion-circuit evaluation with byte-wise $(3,4)$ Shamir sharing, and constructs the server-specific filter gates, reconstruction transcript, second-layer checking gates, and final output-label messages so that reconstruction yields exactly these labels. For each complementary circuit-input label, any share value visible to the corrupted server is generated as an independent uniform field-element string of the same form, as justified by below-threshold byte-wise Shamir privacy.

If $\mathsf{abortinfo}_q$ specifies an abort, the simulator outputs only the visible transcript prefix up to that abort point.

\paragraph{Hybrids.}
\begin{itemize}[leftmargin=*,itemsep=2pt,topsep=2pt]
    \item Let $H_0=(\mathsf{pp},\mathsf{View}_{\mathcal A,q})$ be the real distribution.
    \item In $H_1$, replace the corrupted coalition's visible below-threshold shares of complementary circuit-input labels by uniformly random byte-wise shares. This does not change the labels used in evaluation, their reconstruction shares, checking outcomes, or the final fusion-circuit evaluator view. By byte-wise $(3,4)$ Shamir privacy, $H_0\approx_s H_1$.

    \item In $H_2$, replace the real honest-side interface by the simulated one described above. Honest sensor-side labels visible in submissions, evidence, checking gates, and filter-gate inputs are replaced by simulated pseudorandom labels, which is indistinguishable by PRG security. The PBFT transcript remains consistent because signatures are generated under simulated honest signing keys, corrupted-party messages are induced by $\mathcal A$, and prepare/commit digests bind the simulated outcomes and evidence by collision resistance of $H$. For the final fusion circuit, $H_2$ replaces the real evaluator view $(\widehat C_{\mathsf{fus}},X_{\mathsf{fus}},Y_{\mathsf{fus}})$ with a simulated one generated as follows: $(\widetilde{\widehat C}_{\mathsf{fus}},\widetilde X_{\mathsf{fus}})
\leftarrow
\mathsf{Sim}^{\mathsf{obv}}_{\mathsf{GC}}(1^\kappa,\Phi_{\mathsf{fus}}),
\widetilde Y_{\mathsf{fus}} \leftarrow
\mathsf{Ev}(\widetilde{\widehat C}_{\mathsf{fus}},\widetilde X_{\mathsf{fus}})$. This replacement is computationally indistinguishable by the obliviousness of the garbling scheme for the final fusion circuit.
The surrounding filter-gate outputs, reconstruction shares, and second-layer checking gates are regenerated coherently so that reconstruction yields exactly $\widetilde X_{\mathsf{fus}}$.
Auxiliary checking and filter gates are generated so that every row evaluable by the corrupted coalition has the prescribed output. Garbled-gate authenticity prevents useful outputs from invalid labels except with negligible probability. Therefore, $H_1\approx_c H_2$.
\end{itemize}

By construction, $H_2$ is exactly the output distribution of
$\mathcal S(1^\kappa,L^{\mathsf{coal}}_{\mathcal A,q})$. Hence $(\mathsf{pp},\mathsf{View}_{\mathcal A,q})
=H_0
\approx_s H_1
\approx_c H_2
=
\mathcal S(1^\kappa,L^{\mathsf{coal}}_{\mathcal A,q})$, which proves the theorem.
\end{proof}

\end{document}